\begin{document}
\title{Universality of Logarithmic Loss in Lossy Compression}

\author{\IEEEauthorblockN{Albert No, \emph{Member, IEEE}},
and 
\IEEEauthorblockN{Tsachy Weissman, \emph{Fellow, IEEE}\\}
\thanks{The material in this paper has been presented in part at the 2015 International
Symposium on Information Theory.
This work was supported by the Hongik University new faculty research support fund.}
\thanks{A.\ No is with the Department of Electronic and Electrical Engineering,
Hongik University, Seoul, Korea (email: albertno@hongik.ac.kr)}
\thanks{T.\ Weissman is with the Department of Electrical Engineering, Stanford University,
Stanford, CA 94305, USA (email: tsachy@stanford.edu)}
}

\maketitle

\begin{abstract}
We establish two strong senses of universality of logarithmic loss as a
distortion criterion in lossy compression: For any fixed length lossy
compression problem under an arbitrary distortion criterion,
we show that there is an equivalent lossy compression problem under logarithmic loss.
In the successive refinement problem, if the first decoder operates under logarithmic loss,
we show that any discrete memoryless source is successively refinable under an
arbitrary distortion criterion for the second decoder.

\begin{IEEEkeywords}
Fixed-length lossy compression, logarithmic loss, rate-distortion, successive refinability.
\end{IEEEkeywords}
\end{abstract}
\IEEEpeerreviewmaketitle

\section{Introduction}\label{sec:Introduction}
In the lossy compression problem, logarithmic loss distortion is a criterion
allowing a ``soft" reconstruction of the source, a departure from the classical
setting of a deterministic reconstruction.
Although logarithmic loss plays a crucial role in the theory of learning and prediction,
relatively little work has been done in the context of lossy compression,
notwithstanding the two-encoder multiterminal source coding problem under logarithmic loss
\cite{courtade2011multiterminal, courtade2014multiterminal},
or recent work on the single-shot approach to lossy courec coding under logarithmic loss \cite{shkel2016singleshot}.
Note that lossy compression under logarithmic loss is closely related to the information
bottleneck method \cite{tishby1999bottleneck, harremoes2007information, gilad2003information}.
In this paper, we focus on universal properties of logarithmic loss in two lossy compression problems. 

First, we consider the fixed-length lossy compression problem.
We show that there is a one-to-one correspondence between any fixed-length lossy compression
problem under an arbitrary distortion measure and that under logarithmic loss.
The correspondence is in the following strong sense:
\begin{itemize}
    \item Optimum schemes for the two problems are the same.
    \item A good scheme for one problem is also a good scheme for the other.
\end{itemize}
We will be more precise about ``optimum" and ``goodness" of the scheme in later sections.
This finding essentially implies that it is enough to consider the lossy compression problem
under logarithmic loss.
We point out that our result is different from that of \cite{jiao2015justification}
which justifies  logarithmic loss by showing it is the only loss function
that satisfies a natural data processing requirement.

We also consider the successive refinement problem under logarithmic loss.
Successive refinement is a network lossy compression problem where one encoder wishes
to describe the source to two decoders \cite{equitz1991successive, koshelev1980hierarchical}.
Instead of having two separate coding schemes, the successive refinement encoder designs
a code for the decoder with a weaker link, and sends extra information to the second decoder
on top of the message of the first decoder.
In general, successive refinement coding cannot do as well as
two separate encoding schemes optimized for the respective decoders.
However, if we can achieve the point-to-point optimum rates using successive refinement coding,
we say the source is successively refinable.
We show that any discrete memoryless source is successively refinable as long as the weaker link
employs logarithmic loss, regardless of the distortion criterion used for the stronger link.
This idea can be useful to construct practical point-to-point lossy compression
since it allows a smaller codebook and lower encoding complexity
\cite{no2016strong, venkataramanan2014lossy, no2016rateless}. 

The remainder of the paper is organized as follows.
In Section \ref{sec:Preliminaries}, we revisit some of the known results pertaining to
logarithmic loss and fixed-length lossy compression.
Section \ref{sec:Equivalence} is dedicated to the equivalence between lossy compression under
arbitrary distortion measures and that under logarithmic loss.
Section \ref{sec:Successive Refinability} is dedicated to successive refinement under
logarithmic loss in the weaker link.

\emph{Notation}: $X^n$ denotes an $n$-dimensional random vector $(X_1,X_2,\ldots,X_n)$
while $x^n$ denotes a specific possible realization of the random vector $X^n$.
Similarly, $Q$ denotes a random probability mass function
while $q$ denotes a specific probability mass function.
We use natural logarithm and nats instead of bits.

\section{Preliminaries}\label{sec:Preliminaries}

\subsection{Logarithmic Loss}\label{subsec:Logarithmic Loss}
Let $\cX$ be a set of discrete source symbols ($|\cX|<\infty$),
and $\cM(\cX)$ be the set of probability measures on $\cX$.
Logarithmic loss $\ell:\cX\times \cM(\cX) \ra [0,\infty]$ is defined by
\begin{align*}
    \ell(x,q) = \log \frac{1}{q(x)}
\end{align*}
for $x\in\cX$ and $q\in\cM(\cX)$. Logarithmic loss between $n$-tuples is defined by
\begin{align*}
    \ell_n(x^n,q^n) = \frac{1}{n}\sum_{i=1}^n\log \frac{1}{q_i(x_i)},
\end{align*}
i.e., the symbol-by-symbol extension of the single letter loss.

Let $X^n$ be the discrete memoryless source with distribution $P_X$.
Consider the lossy compression problem under logarithmic loss where
the reconstruction alphabet is $\cM(\cX)$.
The rate-distortion function is given by
\begin{align*}
    R(D) =& \inf_{P_{Q|X}: \E{\ell(X,Q)}\leq D} I(X;Q)\\
         =& H(X)-D.
\end{align*}

The following lemma provides a property of the rate-distortion function achieving
conditional distribution.
\begin{lemma}\label{lem:iff condition}
    The rate-distortion function achieving conditional distribution $P_{Q^\star|X}$ satisfies
    \begin{align}
        P_{X|\Qstar}(\cdot|q) =& q \label{eq:iff condition}\\
        H(X|Q^\star) =& D\label{eq:iff condition2}
    \end{align}
    for $P_{\Qstar}$ almost every $q\in\cM(\cX)$. 
    Conversely, if $P_{Q|X}$ satisfies \eqref{eq:iff condition} and \eqref{eq:iff condition2},
    then it is a rate-distortion function achieving conditional distribution, i.e.,
    \begin{align*}
        I(X;Q) =& R(D) = H(X)-D\\
        \E{\ell(X,Q)} =& D.
    \end{align*}
\end{lemma}
The key idea is that we can replace $Q$ by $P_{X|Q}(\cdot|Q)$,
and have lower rate and distortion, i.e.,
\begin{align*}
    I(X;Q) \geq& I(X;P_{X|Q}(\cdot|Q))\\
    \E{\ell(X,Q)} \geq& \E{\ell(X,P_{X|Q}(\cdot|Q)},
\end{align*}
which directly implies \eqref{eq:iff condition}.

Interestingly, since the rate-distortion function in this case is a straight line,
a simple time sharing scheme achieves the optimal rate-distortion tradeoff.
If the encoder losslessly compresses only the first $\frac{H(X)-D}{H(X)}$ fraction
of the source sequence components, while the decoder perfectly recovers those components
and uses $P_X$ as its reconstruction for the remaining components then the resulting
scheme obviously achieves distortion $D$ with rate $H(X)-D$.

Moreover, this simple scheme directly implies successive refinability of the source.
For $D_1>D_2$, suppose the encoder losslessly compresses the first $\frac{H(X)-D_2}{H(X)}$
fraction of the source.
Then the first decoder can perfectly reconstruct $\frac{H(X)-D_1}{H(X)}$ fraction of the source
with the message of rate $H(X) - D_1$ and distortion $D_1$
while the second decoder can achieve distortion $D_2$ with rate $H(X)-D_2$.
Since both decoders can achieve the best rate-distortion pair,
it follows that any discrete memoryless source under logarithmic loss is successively refinable.

\subsection{Fixed-Length Lossy Compression}\label{subsec:One Shot}
In this section, we briefly introduce the basic settings of the fixed-length lossy compression problem
(see \cite{kostina2012fixed} and references therein for more details).
In the fixed-length lossy compression setting, we have a source $X$ with finite alphabet
$\cX = \{1,\ldots,r\}$ and source distribution $P_X$.
An encoder $f:\cX\rightarrow \{1,\ldots,M\}$ maps the source symbol to one of $M$ messages.
On the other side, a decoder $g:\{1,\ldots,M\}\rightarrow \cXh$ maps the message to the actual
reconstruction $\Xh$ where the reconstruction alphabet is also finite $\cXh = \{1,\ldots,s\}$.
Let $d:\cX\times \cXh\rightarrow[0,\infty)$ be a distortion measure between
the source and the reconstruction.

First, we can think of the code that the expected distortion is lower than a given distortion level.
\begin{definition}[Average distortion criterion]
    An $(M,D)$ code is a pair of an encoder $f$ with $|f|\leq M$ and a decoder $g$ such that
    \begin{align*}
        \E{d(X,g(f(X)))}\leq D.
    \end{align*}
    The minimum number of codewords required to achieve average distortion not exceeding $D$ is defined by
    \begin{align*}
        M^\star(D) = \min\{M: \exists (M,D)\mbox{ code}\}.
    \end{align*}
    Similarly, we can define the minimum achievable average distortion
    given the number of codewords $M$.
    \begin{align*}
        D^\star(M) = \min\{D: \exists (M,D)\mbox{ code}\}.
    \end{align*}
\end{definition}

One may consider a stronger criterion that restricts the probability of exceeding a given distortion level.
\begin{definition}[Excess distortion criterion]
    An $(M,D,\epsilon)$ code is a pair of an encoder $f$ with $|f|\leq M$ and a decoder $g$ such that
    \begin{align*}
        \Pr{d(X,g(f(X)))>D}\leq \epsilon.
    \end{align*}
    
    The minimum number of codewords required to achieve excess distortion probability $\epsilon$
    and distortion $D$ is defined by
    \begin{align*}
        M^\star(D,\epsilon) = \min\{M: \exists (M,D,\epsilon)\mbox{ code}\}.
    \end{align*}
    
    Similarly, we can define the minimum achievable excess distortion probability
    given the target distortion $D$ and the number of codewords $M$.
    \begin{align*}
        \epsilon^\star(M,\epsilon) = \min\{\epsilon: \exists (M,D,\epsilon)\mbox{ code}\}.
    \end{align*}
\end{definition}

Given the target distortion $D$ and $P_X$, the informational rate-distortion function is defined by
\begin{align*}
    R(D) = \inf_{P_{\Xh|X}: \E{d(X,\Xh)}\leq D} I(X;\Xh)
\end{align*}
We make the following benign assumptions:
\begin{itemize}
    \item There exists a unique rate-distortion function achieving
        conditional distribution $P_{\Xh|X}^\star$.
    \item We assume that $P_{\Xh^\star}(\xh)>0$ for all $\xh\in\cXh$
        since we can always discard the reconstruction symbol with zero probability.
    \item If $d(x,\xh_1) = d(x,\xh_2)$ for all $x\in\cX$, then $\xh_1 = \xh_2$.
\end{itemize}

Define the information density of the joint distribution $P_{X,\Xh}$ by
\begin{align*}
    \imath_{X;\Xh}(x;\xh)  = \log \frac {P_{X,\Xh}(x,\xh)}{P_X(x) P_{\Xh}(\xh)}.
\end{align*}
Then, we are ready to define the $D$-tilted information which plays a key role
in fixed-length lossy compression.
\begin{definition}\cite[Definition 6]{kostina2012fixed}
    The $D$-tilted information in $x\in\cX$ is defined as
    \begin{align*}
        \jmath_X(x,D) = \log\frac{1}{\E{\exp\left(\lambda^{\star} D
            - \lambda^{\star} d(x,\Xh^{\star})\right)}}
    \end{align*}
    where the expectation is with respect to the marginal distribution of $\Xh^{\star}$
    and $\lambda^{\star} = -R'(D)$.
\end{definition}

\begin{theorem}\cite[Lemma 1.4]{csiszar1974extremum}
    For all $\xh\in\cXh$, 
    \begin{align}
        \jmath_X(x,D)= \imath_{X;\Xh^{\star}}(x;\xh) + \lambda^{\star} d(x,\xh)
            -\lambda^{\star} D,\label{eq:optimum distribution with tilted information}
    \end{align}
    and therefore we have
    \begin{align*}
        R(D) = \E{\jmath(X,D)}.
    \end{align*}
\end{theorem}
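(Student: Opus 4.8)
The plan is to prove the identity \eqref{eq:optimum distribution with tilted information} and then obtain $R(D) = \E{\jmath(X,D)}$ as a direct consequence by taking expectations. I would start from the two characterizations of the optimal conditional distribution $P^\star_{\Xh|X}$. The parametric form of the rate-distortion minimizer (from the Lagrangian/KKT conditions, i.e.\ the Csisz\'ar extremum result being cited) gives that the optimal transition kernel satisfies
\begin{align*}
    P^\star_{\Xh|X}(\xh|x) = \frac{P_{\Xh^\star}(\xh)\exp(-\lambda^\star d(x,\xh))}{\E{\exp(-\lambda^\star d(x,\Xh^\star))}},
\end{align*}
for every $\xh$ in the support of $P_{\Xh^\star}$, where the expectation is over the marginal $P_{\Xh^\star}$ and $\lambda^\star = -R'(D)$. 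The key structural fact I would rely on is that this ratio, and hence the denominator, does \emph{not} depend on which $\xh$ we choose, which is exactly what makes the $D$-tilted information well defined independently of $\xh$.

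Next I would take the logarithm of the information density $\imath_{X;\Xh^\star}(x;\xh) = \log\bigl(P^\star_{\Xh|X}(\xh|x)/P_{\Xh^\star}(\xh)\bigr)$ and substitute the parametric form above. The $P_{\Xh^\star}(\xh)$ factors cancel, leaving
\begin{align*}
    \imath_{X;\Xh^\star}(x;\xh) = -\lambda^\star d(x,\xh) - \log\E{\exp(-\lambda^\star d(x,\Xh^\star))}.
\end{align*}
Comparing with the definition $\jmath_X(x,D) = -\log\E{\exp(\lambda^\star D - \lambda^\star d(x,\Xh^\star))} = \lambda^\star D - \log\E{\exp(-\lambda^\star d(x,\Xh^\star))}$ and rearranging term by term yields precisely $\jmath_X(x,D) = \imath_{X;\Xh^\star}(x;\xh) + \lambda^\star d(x,\xh) - \lambda^\star D$, establishing \eqref{eq:optimum distribution with tilted information} for every $\xh\in\cXh$.

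For the final claim $R(D) = \E{\jmath(X,D)}$, I would take the expectation of \eqref{eq:optimum distribution with tilted information} under the \emph{joint} optimal distribution $P_{X,\Xh^\star}$ rather than fixing $\xh$; since the left side does not depend on $\xh$ this is legitimate. The first term averages to $I(X;\Xh^\star) = R(D)$ by definition of mutual information, and the remaining two terms average to $\lambda^\star\bigl(\E{d(X,\Xh^\star)} - D\bigr)$, which vanishes because the optimal scheme meets the distortion constraint with equality, $\E{d(X,\Xh^\star)} = D$ (a consequence of $\lambda^\star = -R'(D) > 0$ and complementary slackness, valid in the nondegenerate regime where $R$ is strictly decreasing).

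The main obstacle I anticipate is not the algebra, which is essentially a cancellation, but justifying the parametric form of the minimizer and the constraint-tightness in full rigor: this requires that $R(D)$ be differentiable at $D$ with $\lambda^\star = -R'(D)$ well defined, and that the optimizer $P^\star_{\Xh|X}$ genuinely has the exponential-tilting form. These follow from the convexity of the rate-distortion problem and the KKT conditions encapsulated in the cited Csisz\'ar lemma, together with the uniqueness and full-support assumptions imposed earlier in the excerpt; I would invoke those results rather than rederive them, so that the proof reduces to the substitution and expectation steps above.
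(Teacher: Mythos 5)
Your overall route is the right one---and in fact it is essentially the only route available, since the paper itself gives no proof of this statement: it is quoted directly from Csisz\'ar's extremum paper (Lemma 1.4 there), whose content is precisely the exponential-tilting/KKT characterization you invoke, combined with the definition of the $D$-tilted information. Unpacking that characterization, substituting into the information density, and then averaging over the joint $P_{X,\Xh^\star}$ with complementary slackness $\E{d(X,\Xh^\star)}=D$ is exactly how one obtains both \eqref{eq:optimum distribution with tilted information} and $R(D)=\E{\jmath_X(X,D)}$.

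However, there is a sign slip that, as written, breaks your term-by-term comparison. From the paper's definition,
\begin{align*}
    \jmath_X(x,D) &= -\log\E{\exp\left(\lambda^\star D - \lambda^\star d(x,\Xh^\star)\right)}
    = -\lambda^\star D - \log\E{\exp\left(-\lambda^\star d(x,\Xh^\star)\right)},
\end{align*}
i.e., the $\lambda^\star D$ term carries a minus sign, not the plus sign in your expansion. With your expression $\jmath_X(x,D) = \lambda^\star D - \log\E{\exp(-\lambda^\star d(x,\Xh^\star))}$, combining it with
\begin{align*}
    \imath_{X;\Xh^\star}(x;\xh) = -\lambda^\star d(x,\xh) - \log\E{\exp\left(-\lambda^\star d(x,\Xh^\star)\right)}
\end{align*}
would yield $\jmath_X(x,D) = \imath_{X;\Xh^\star}(x;\xh) + \lambda^\star d(x,\xh) + \lambda^\star D$, which is off by $2\lambda^\star D$ from the claim. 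With the corrected sign the cancellation closes and gives \eqref{eq:optimum distribution with tilted information} exactly, and your expectation step for $R(D)=\E{\jmath_X(X,D)}$ is then fine as described. One further point worth making explicit: the paper's standing assumption that $P_{\Xh^\star}(\xh)>0$ for every $\xh\in\cXh$ is what allows you to assert the parametric form, and hence the identity, for \emph{all} $\xh\in\cXh$ rather than only on the support of $P_{\Xh^\star}$; without full support, Csisz\'ar's condition holds with equality only on the support and with an inequality off it.
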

Let $P_{X|\Xh}^\star$ be the induced conditional probability from $P_{\Xh|X}^\star$.
Then \eqref{eq:optimum distribution with tilted information} can equivalently be expressed as
\begin{align}
    &\log \frac{1}{P_{X|\Xh}^\star(x|\xh)}\nonumber\\
    &= \log \frac{1}{P_X(x)} -\jmath_X(x,D) +\lambda^\star d(x,\xh) - \lambda^\star D.
        \label{eq:PXgivenPXhstar}
\end{align}

The following lemma shows that $P^{\star}_{X|\Xh}(\cdot|\xh)$ are all distinct.
\begin{lemma}\label{lem:uniqueness of rows}
    For all $\xh_1\neq \xh_2$, there exists $x\in\cX$ such that
    $P^{\star}_{X|\Xh}(x|\xh_1)\neq P^{\star}_{X|\Xh}(x|\xh_2)$.
\end{lemma}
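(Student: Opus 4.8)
The plan is to exploit the explicit formula \eqref{eq:PXgivenPXhstar} for the optimal reverse channel and to reduce the claim to the third benign assumption. Suppose, toward a contradiction, that $\xh_1\neq\xh_2$ but $P^{\star}_{X|\Xh}(\cdot|\xh_1)=P^{\star}_{X|\Xh}(\cdot|\xh_2)$. Writing \eqref{eq:PXgivenPXhstar} once for $\xh_1$ and once for $\xh_2$ and subtracting, every term that does not depend on the reconstruction symbol, namely $\log\frac{1}{P_X(x)}$, the $D$-tilted information $\jmath_X(x,D)$, and the constant $\lambda^\star D$, cancels, leaving
\begin{align*}
    0 = \lambda^\star\bigl(d(x,\xh_1)-d(x,\xh_2)\bigr)
\end{align*}
for every $x\in\cX$.

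When $\lambda^\star=-R'(D)>0$, this identity forces $d(x,\xh_1)=d(x,\xh_2)$ for all $x\in\cX$, and the third benign assumption immediately yields $\xh_1=\xh_2$, contradicting $\xh_1\neq\xh_2$. This is the crux of the argument: the differencing step isolates exactly the distortion profile of each reconstruction symbol, and the assumption that no two reconstruction symbols induce identical distortion profiles is precisely what closes the loop. I would present this as the main case.

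The one subtlety, which I expect to be the real obstacle, is the boundary case $\lambda^\star=0$, where the displayed identity is satisfied trivially and \eqref{eq:PXgivenPXhstar} makes $P^{\star}_{X|\Xh}(x|\xh)$ independent of $\xh$. I would handle it by observing that $\lambda^\star=0$ forces $R(D)=0$, hence $I(X;\Xh^\star)=0$ and $\Xh^\star$ independent of $X$. In that regime every product distribution meeting the distortion constraint attains the infimum, so the postulated uniqueness of the achieving conditional $P_{\Xh|X}^\star$, together with the standing assumption that $P_{\Xh^\star}(\xh)>0$ for all $\xh\in\cXh$, can hold only when $|\cXh|=1$. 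With a single reconstruction symbol there is no pair $\xh_1\neq\xh_2$ to test, so the statement holds vacuously, completing the proof.
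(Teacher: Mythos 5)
Your proof is correct and follows essentially the same route as the paper's: assume two rows of $P^{\star}_{X|\Xh}$ coincide, subtract the two instances of \eqref{eq:PXgivenPXhstar} so that the terms not depending on $\xh$ cancel, conclude $d(\cdot,\xh_1)=d(\cdot,\xh_2)$, and invoke the third benign assumption on the distortion measure. The only difference is that you additionally treat the degenerate case $\lambda^\star=0$ explicitly (the paper silently divides by $\lambda^\star$), and your resolution of that case via $R(D)=0$ and the uniqueness/full-support assumptions is sound.
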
 

\begin{proof}
    Suppose by contradiction that $P_{X|\Xh}^{\star}(x|\xh_1) = P_{X|\Xh}^{\star}(x|\xh_2)$ for all $x\in \cX$.
    Then, \eqref{eq:PXgivenPXhstar} implies
    \begin{align*}
    &\log \frac{1}{P_X(x)} -\jmath_X(x,D) +\lambda^\star d(x,\xh_1) - \lambda^\star D\\
    &=\log \frac{1}{P_X(x)} -\jmath_X(x,D) +\lambda^\star d(x,\xh_2) - \lambda^\star D
    \end{align*}
    for all $x\in\cX$, i.e., $d(x,\xh_1) = d(x,\xh_2)$ for all $x\in\cX$,
    which violates our assumption on the distortion measure.
\end{proof}

\subsection{Successive Refinability}\label{subsec:Successive Refinability}
In this section, we review the successive refinement problem with two decoders.
Let the source $X^n$ be i.i.d.\ random vector with distribution $P_X$.
The encoder wants to describe $X^n$ to two decoders by sending a pair of messages $(m_1,m_2)$
where $1\leq m_i\leq M_i$ for $i \in \{1,2\}$.
The first decoder reconstructs $\hat{X}_1^n(m_1)\in\hat{\cX}_1^n$ based only on the first message $m_1$.
The second decoder reconstructs $\hat{X}_2^n(m_1,m_2)\in\hat{\cX}_2^n$ based on both $m_1$ and $m_2$.
The setting is described in Figure \ref{fig:Successive Refinement}.

\tikzstyle{format} = [thin]
\tikzstyle{medium} = [rectangle, draw, thin, fill=blue!20, minimum height=2.5em, minimum width = 4em]
\begin{figure}[h]
\centering
\begin{tikzpicture}[node distance=2.8cm, auto,>=latex', thick]
    \node [format] (src) {$X^n$};
    \node [medium, right of=src, node distance = 2cm](enc){Enc};
    \node [medium, right of=enc, node distance = 3cm](dec1){Dec 1};
    \node [medium, below of=dec1, node distance = 1.5cm](dec2){Dec 2};
    \node [format, right of=dec1, node distance = 2cm](rec1){$\hat{X}_1^n$};
    \node [format, right of=dec2, node distance = 2cm](rec2){$\hat{X}_2^n$};
    \draw [->] (src) -- node {}(enc);
    \draw [->] (enc) -- node {$m_1$}(dec1);
   \draw  [->] (enc.east) -- ++(.7,0)  -- ++(0,-1.3)-- ++(.87,0);
    \draw [->] (enc) |- node [above right] {$m_2$}(dec2);
    \draw [->] (dec1) -- node{} (rec1);
    \draw [->] (dec2) -- node{} (rec2);
\end{tikzpicture}
\caption{Successive Refinement}\label{fig:Successive Refinement}
\end{figure}
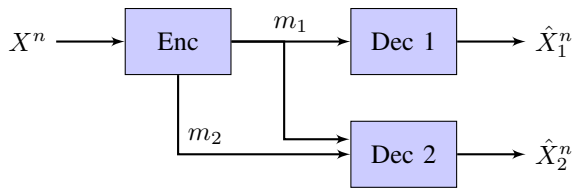 
Let $d_i(\cdot,\cdot): \cX \times \hat{\cX}_i\rightarrow [0,\infty)$ be a distortion measure for $i$-th decoder.
The rates of code $(R_1,R_2)$ are simply defined as
\begin{align*}
R_1 =&\frac{1}{n}\log M_1\\
R_2 =&\frac{1}{n}\log M_1M_2.
\end{align*}
An $(n,R_1,R_2,D_1,D_2,\epsilon)$-successive refinement code is a coding scheme with block length $n$
and excess distortion probability $\epsilon$ where rates are $(R_1,R_2)$ and target distortions are $(D_1,D_2)$.
Since we have two decoders, the excess distortion probability is defined by $\Pr{d_i(X^n,\hat{X}_i^n)>D_i \mbox{ for some $i$}}$.
\begin{definition}
    A rate-distortion tuple $(R_1,R_2,D_1,D_2)$ is said to be achievable,
    if there is a family of $(n, R_1^{(n)}, R_2^{(n)}$, $D_1, D_2, \epsilon^{(n)})$-successive refinement code where
    \begin{align*}
        &\lim_{n\rightarrow\infty}R_i^{(n)}=R_i  \mbox{ for all $i$, }\\
        &\lim_{n\rightarrow\infty}\epsilon^{(n)}=0.
    \end{align*}
\end{definition}

For some special cases, both decoders can achieve the point-to-point optimum rates simultaneously.
\begin{definition}\label{def:successive refinability}
    Let $R_i(D_i)$ denote the rate-distortion function of the $i$-th decoder for $i\in\{1,2\}$.
    If the rate-distortion tuple $(R_1(D_1), R_2(D_2), D_1, D_2)$ is achievable,
    then we say the source is \emph{successively refinable at $(D_1,D_2)$}.
    If the source is successively refinable at $(D_1, D_2)$ for all $D_1,D_2$,
    then we say the source is \emph{successively refinable}.
\end{definition}

The following theorem provides a necessary and sufficient condition of successive refinable sources.
\begin{theorem}[\cite{equitz1991successive}, \cite{koshelev1980hierarchical}]\label{thm:iffcondition for successive refinability}
    A source is successively refinable at $(D_1,D_2)$ if and only if there exists
    a conditional distribution $P_{\hat{X}_1,\hat{X}_2|X}$ such that $X-\hat{X}_2-\hat{X}_1$ forms a Markov chain and
    \begin{align*}
        R_i(D_i) &= I(X;\hat{X}_i) \\
        \E{d_i(X,\hat{X}_i)}&\leq D_i
    \end{align*}
    for $i\in\{1, 2\}$.
\end{theorem}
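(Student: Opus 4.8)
The plan is to prove both directions by the standard route: random superposition coding for sufficiency, and a single-letterization converse for necessity, with the Markov chain emerging precisely from the tightness of the rate inequalities. For sufficiency, suppose such a $P_{\hat{X}_1,\hat{X}_2|X}$ exists. I would build a two-layer code: the coarse layer is an ordinary rate-distortion codebook for $\hat{X}_1^n$ of size $e^{n(I(X;\hat{X}_1)+\delta)}$, indexed by $m_1$; superimposed on each coarse codeword I would draw a refinement codebook for $\hat{X}_2^n$, generated i.i.d.\ from $P_{\hat{X}_2|\hat{X}_1}$, of size $e^{n(I(X;\hat{X}_2|\hat{X}_1)+\delta)}$, indexed by $m_2$. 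The encoder selects $(m_1,m_2)$ so that $(X^n,\hat{X}_1^n(m_1),\hat{X}_2^n(m_1,m_2))$ is jointly typical; the covering lemma and its conditional version guarantee success with probability tending to one. Decoder $1$ outputs $\hat{X}_1^n(m_1)$ and decoder $2$ outputs $\hat{X}_2^n(m_1,m_2)$, and typicality yields $d_i(X^n,\hat{X}_i^n)\le D_i+\delta'$ outside a vanishing-probability set.

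The rate to the first decoder is $I(X;\hat{X}_1)+\delta=R_1(D_1)+\delta$, while the cumulative rate to the second decoder is $I(X;\hat{X}_1)+I(X;\hat{X}_2|\hat{X}_1)+2\delta=I(X;\hat{X}_1,\hat{X}_2)+2\delta$. Here the Markov chain is used crucially: $X-\hat{X}_2-\hat{X}_1$ gives $I(X;\hat{X}_1,\hat{X}_2)=I(X;\hat{X}_2)=R_2(D_2)$, so the cumulative rate is $R_2(D_2)+2\delta$, and letting $\delta\to 0$ establishes achievability of $(R_1(D_1),R_2(D_2),D_1,D_2)$.

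For necessity, suppose the source is successively refinable, so there is a sequence of codes with $R_1^{(n)}\to R_1(D_1)$, $R_2^{(n)}\to R_2(D_2)$, and $\epsilon^{(n)}\to 0$. Writing $\hat{X}_1^n=g_1(m_1)$ and $\hat{X}_2^n=g_2(m_1,m_2)$, the rate definitions give $nR_1^{(n)}\ge H(m_1)\ge I(X^n;\hat{X}_1^n)$ and $nR_2^{(n)}\ge H(m_1,m_2)\ge I(X^n;\hat{X}_1^n,\hat{X}_2^n)$. Using memorylessness of the source to single-letterize and introducing a uniform time-sharing index $J$, I would set $X=X_J$, $\hat{X}_1=\hat{X}_{1,J}$, $\hat{X}_2=\hat{X}_{2,J}$, obtaining in the limit $R_1(D_1)\ge I(X;\hat{X}_1)$ and $R_2(D_2)\ge I(X;\hat{X}_1,\hat{X}_2)\ge I(X;\hat{X}_2)$. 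Since the alphabets are finite the distortion is bounded, so $\epsilon^{(n)}\to 0$ forces $\E{d_i(X,\hat{X}_i)}\le D_i$; combined with the definition of the rate-distortion function this gives $I(X;\hat{X}_1)\ge R_1(D_1)$ and $I(X;\hat{X}_2)\ge R_2(D_2)$. Both chains of inequalities must therefore be equalities, and in particular $I(X;\hat{X}_1,\hat{X}_2)=I(X;\hat{X}_2)$, i.e.\ $I(X;\hat{X}_1\mid\hat{X}_2)=0$, which is exactly the Markov chain $X-\hat{X}_2-\hat{X}_1$.

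I expect the main obstacle to be the necessity direction: carrying out the single-letterization while keeping the rate bounds tight enough that the Markov structure survives in the limit, passing from the excess-distortion guarantee $\epsilon^{(n)}\to 0$ to the single-letter average-distortion bound $\E{d_i(X,\hat{X}_i)}\le D_i$, and invoking continuity of $R_i(\cdot)$ at $D_i$ to close the argument. The sufficiency direction is routine once the conditional covering lemma is in place; the only point requiring attention there is that the refinement codebook be generated from $P_{\hat{X}_2|\hat{X}_1}$, so that the cumulative rate collapses to exactly $R_2(D_2)$ by virtue of the assumed Markov chain.
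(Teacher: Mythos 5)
This theorem is stated in the paper as a known result cited from \cite{equitz1991successive} and \cite{koshelev1980hierarchical}; the paper itself offers no proof of it. Your argument---superposition coding with a conditional covering lemma for sufficiency, and a single-letterization converse in which tightness of the chain $R_2(D_2)\geq I(X;\hat{X}_1,\hat{X}_2)\geq I(X;\hat{X}_2)\geq R_2(D_2)$ forces $I(X;\hat{X}_1\mid\hat{X}_2)=0$---is precisely the classical proof from those cited references, and it is correct (modulo the standard technical steps you yourself flag: passing to a convergent subsequence of single-letter joint distributions and invoking continuity of $R_i(\cdot)$).
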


The condition in the theorem holds for all discrete memoryless sources under logarithmic loss.
Note that the above results of successive refinability can easily be generalized to the case of $k$ decoders.
\section{One to One Correspondence Between General Distortion and Logarithmic Loss}\label{sec:Equivalence}

\subsection{Main Results}

Consider fixed-length lossy compression under an arbitrary distortion $d(\cdot, \cdot)$
as described in Section \ref{subsec:One Shot}.
We have a source $X$ with the finite alphabet $\cX=\{1,\ldots,r\}$, source distribution $P_X$,
and the finite reconstruction alphabet $\cXh=\{1,\ldots, s\}$.
For the fixed number of messages $M$, let $f^\star$ and $g^\star$ be the encoder and decoder
that achieve the optimum average distortion $D^\star(M)$.
I.e.,
\begin{align*}
    \E{d(X, g^\star(f^\star(X)))} = D^\star(M).
\end{align*}

Let $P_{\Xh|X}^\star$ denote the rate-distortion function achieving conditional distribution at distortion $D=D^\star(M)$.
I.e., $P_X \times P_{\hat{X}|X}^\star$ achieves the infimum in
\begin{align}
    R(D^\star(M)) = \inf_{P_{\hat{X}|X}:\E{d(X,\hat{X})}\leq D^\star(M)} I(X;\hat{X}). \label{eq:RDstar}
\end{align}
Note that $R(D^\star(M))$ may be strictly smaller than $\log M$ in general
since $R(\cdot)$ is an informational rate-distortion function
which does not characterize the best achievable performance for the ``one-shot'' setting in which $D^\star(M)$ is defined.

Now, we define the corresponding fixed-length lossy compresson problem under logarithmic loss.
In the corresponding problem, the source alphabet $\cX = \{1,\ldots, r\}$, source distribution $P_X$,
and the number of messages $M$ remain the same.
However, we have different reconstruction alphabet
$\cY = \{P_{X|\Xh}^{\star}(\cdot|\xh):\xh\in\cXh\}\subset \cM(\cX)$
where $P^\star$ pertains to the achiever of the infimum in \eqref{eq:RDstar}
associated with the original loss function.
Let the distortion now be logarithmic loss.

We can also connect the encoding and decoding schemes between the two problems.
Suppose $f:\cX\rightarrow \{1,\ldots,M\}$ and $g:\{1,\ldots,M\}\rightarrow \cXh$ are an encoder
and decoder pair in the original problem.
Let $f_{\ell} = f$ and $g_{\ell}:\{1,\ldots,M\}\rightarrow \cY$ where
\begin{align*}
    g_{\ell}(m) = P_{X|\Xh}^\star(\cdot|g(m)).
\end{align*}
Then, $f_{\ell}$ and $g_{\ell}$ are an valid encoder and decoder pair for the
corresponding fixed-length lossy compression problem under logarithmic loss.
Conversely, given $f_{\ell}$ and $g_{\ell}$, we can find corresponding $f$ and $g$ because
Lemma \ref{lem:uniqueness of rows} guarantees that $P_{X|\Xh}(\cdot|\xh)$ are distinct.

The following result shows the relation between the corresponding schemes.
\begin{theorem}\label{thm:correspondence}
    For any encoder-decoder pair $(f_{\ell},g_{\ell})$ for the corresponding fixed-length lossy compression problem
    under logarithmic loss, we have
    \begin{align*}
        &\E{\ell(X,g_{\ell}(f_{\ell}(X)))}\nonumber\\
        &= H(X|\Xh^{\star}) + \lambda^{\star}\left(\E{d(X,g(f(X)))}-D^{\star}(M)\right)\nonumber\\
        &\geq H(X|\Xh^{\star})
    \end{align*}
    where $(f,g)$ is the corresponding encoder-decoder pair
    for the original lossy compression problem.
    Note that $H(X|\Xh^{\star})$ and the expectations are with respect to
    the distribution $P_X \times P_{\Xh|X}^\star$.
    Moreover, equality holds if and only if
    $f_\ell = f^\star$ and $g_\ell(m) = P_{X|\Xh}^\star(\cdot|g^\star(m))$.
\end{theorem}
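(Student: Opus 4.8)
The plan is to compute the logarithmic loss explicitly through the correspondence $g_\ell(m)=P_{X|\Xh}^\star(\cdot\,|\,g(m))$ together with $f_\ell=f$, and then to trade it for distortion using the pointwise identity \eqref{eq:PXgivenPXhstar}. Write $D=D^\star(M)$. Since the reconstruction produced by $g_\ell(f_\ell(X))$ is the probability vector $P_{X|\Xh}^\star(\cdot\,|\,g(f(X)))$, the per-symbol loss for a realization $x$ is $\ell(x,g_\ell(f_\ell(x)))=\log\frac{1}{P_{X|\Xh}^\star(x\,|\,g(f(x)))}$. First I would substitute $\xh=g(f(x))$ into \eqref{eq:PXgivenPXhstar}, which rewrites this loss as $\log\frac{1}{P_X(x)}-\jmath_X(x,D)+\lambda^\star d(x,g(f(x)))-\lambda^\star D$.

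Next I would take expectation over $X\sim P_X$ and identify the four resulting terms. The first is $\E{\log\frac{1}{P_X(X)}}=H(X)$; the second is $\E{\jmath_X(X,D)}=R(D)$ by the identity $R(D)=\E{\jmath_X(X,D)}$; the third is $\lambda^\star\E{d(X,g(f(X)))}$, i.e.\ $\lambda^\star$ times the average distortion of the original scheme; and the fourth is the constant $-\lambda^\star D$. Because $P_{\Xh|X}^\star$ attains the infimum defining $R(D)$ at $D=D^\star(M)$, we have $I(X;\Xh^\star)=R(D)$ and therefore $H(X)-R(D)=H(X)-I(X;\Xh^\star)=H(X|\Xh^\star)$. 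Collecting the terms produces exactly $H(X|\Xh^\star)+\lambda^\star\bigl(\E{d(X,g(f(X)))}-D^\star(M)\bigr)$, which is the asserted equality.

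For the lower bound I would use that $\lambda^\star=-R'(D)\geq 0$ (the rate-distortion function is non-increasing) and that $\E{d(X,g(f(X)))}\geq D^\star(M)$, the latter because any $(f,g)$ using at most $M$ messages is an $(M,\E{d(X,g(f(X)))})$ code and $D^\star(M)$ is by definition the smallest distortion attainable; the product of these two nonnegative quantities is nonnegative, giving $\E{\ell(X,g_\ell(f_\ell(X)))}\geq H(X|\Xh^\star)$. The hard part will be the equality characterization: equality forces $\lambda^\star\bigl(\E{d(X,g(f(X)))}-D^\star(M)\bigr)=0$, and in the nondegenerate regime $\lambda^\star>0$ this is equivalent to $\E{d(X,g(f(X)))}=D^\star(M)$, i.e.\ $(f,g)$ is distortion-optimal, whence $f=f^\star$ and the reconstruction coincides with $g^\star$ so that $g_\ell(m)=P_{X|\Xh}^\star(\cdot\,|\,g^\star(m))$ and, back through the correspondence, $f_\ell=f^\star$. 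The main obstacle is precisely pinning this down: I would need to rule out or treat separately the degenerate case $\lambda^\star=0$ (in which the displayed identity holds for every scheme, so the ``only if'' direction would fail), and to invoke uniqueness of the optimal one-shot pair to conclude $(f,g)=(f^\star,g^\star)$ rather than merely some other distortion-optimal pair.
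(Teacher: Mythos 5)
Your proof follows essentially the same route as the paper's: substitute $\xh = g(f(x))$ into the pointwise identity \eqref{eq:PXgivenPXhstar}, take expectations and use $\E{\jmath_X(X,D^\star(M))} = R(D^\star(M)) = I(X;\Xh^\star)$ to collapse the terms to $H(X|\Xh^\star) + \lambda^\star\left(\E{d(X,g(f(X)))}-D^\star(M)\right)$, then invoke the definition of $D^\star(M)$ as the minimum achievable average distortion for the inequality. The obstacles you flag in the equality characterization (the degenerate case $\lambda^\star = 0$, and needing uniqueness of the distortion-optimal pair to upgrade $\E{d(X,g(f(X)))}=D^\star(M)$ to $(f,g)=(f^\star,g^\star)$) are genuine, but the paper's own proof glosses over exactly the same points, justifying only the ``if'' direction of its claimed ``if and only if.''
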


\begin{proof}
    We have
    \begin{align*}
        &\E{\ell(X,g_{\ell}(f_{\ell}(X)))}\nonumber\\
        & =\E{\ell\left(X,P_{X|\Xh}^{\star}(\cdot|g(f(X)))\right)}\\
        &= \E{\log\frac{1}{P_{X|\Xh}^{\star}(x|g(f(x)))}}.
    \end{align*}
    Then, \eqref{eq:PXgivenPXhstar} implies that
    \begin{align}
        &\E{\ell(X,g_{\ell}(f_{\ell}(X)))}\nonumber\\
        &=\E{\log\frac{1}{P_X(X)} - \jmath_X(X,D^\star(M))}\nonumber\\
        &~~~ + \E{\lambda^\star d(X,g(f(X))) - \lambda^\star D^\star(M)}\nonumber\\
        &= H(X|\Xh^{\star}) + \lambda^{\star}\left(\E{d(X,g(f(X)))}-D^{\star}(M)\right)
            \label{eq:expected tilted=RD}\\
        & \geq H(X|\Xh^{\star}) \label{eq:equalitycondition}
    \end{align}
    where \eqref{eq:expected tilted=RD} is because
    $\E{\jmath_X(X,D^\star(M))} = R(D^\star(M)) = I(X;\Xh^\star)$
    with respect to the distribution $P_X \times P_{\Xh|X}^\star$.
    The last inequality \eqref{eq:equalitycondition} is because $D^\star(M)$ is the minimum achievable
    average distortion with $M$ codewords.
    Equality holds if and only if $\E{d(X,g(f(X)))}=D^{\star}(M)$ which can be achieved by
    the optimum scheme for the original lossy compression problem.
    In other words, equality holds if
    \begin{align*}
        f^\star_\ell =& f^\star\\
        g^\star_\ell(m) =& P_{X|\Xh}^\star(\cdot|g^\star(m)).
    \end{align*}
\end{proof}

\begin{remark}
    In the corresponding fixed-length lossy compression problem under logarithmic loss,
    the minimum achievable average distortion given the number of codewords $M$ is
    \begin{align*}
        D^\star_{\ell}(M) = H(X|\Xh^\star)
    \end{align*}
    where the conditional entropy is with respect to the distribution $P_X\times P_{\Xh|X}^\star$.
\end{remark}

\subsection{Discussion}

\subsubsection{One-to-One Correspondence}\label{subsubsec:One-to-One Correspondence}
Theorem \ref{thm:correspondence} implies that for any fixed-length lossy compression problem,
we can find an equivalent problem under the logarithmic loss where optimum encoding schemes are the same.
Thus, without loss of generality, we can restrict our attention to the problem
under logarithmic loss with reconstruction alphabet
$\cY = \{q^{(1)},\ldots,q^{(s)}\}\subset\cM(\cX)$.

\subsubsection{Suboptimality of the Scheme}
Suppose $f$ and $g$ are sub-optimal encoder and decoder for the original fixed-length lossy compression problem,
then the theorem implies
\begin{align}
    &\E{\ell(X,g_\ell(X))} - H(X|\Xh^{\star})\nonumber\\
    &=  \lambda^{\star}\left(\E{d(X,g(f(X)))}-D^{\star}(M)\right)\label{eq:CostofSuboptimality}.
\end{align}
The left hand side of \eqref{eq:CostofSuboptimality} is the cost of sub-optimality
for the corresponding lossy compression problem.
On the other hand, the right hand side is proportional to the cost of sub-optimality
for the original problem.
In Section \ref{subsubsec:One-to-One Correspondence},
we discussed that the optimum schemes of the two problems coincide.
Equation \eqref{eq:CostofSuboptimality} shows stronger equivalence
that the cost of sub-optimalities are linearly related.
This simply implies a good code for one problem is also good for another.

\subsubsection{Operations on $\cXh$}
In general, the reconstruction alphabet $\cXh$ does not have any algebraic structure.
However, in the corresponding rate-distortion problem,
the reconstruction alphabet is the set of probability measures where we have
very natural operations such as convex combinations of elements, projection to a convex hull, etc.

\subsection{Exact Performance of Optimum Scheme}
In the previous section, we showed that there is a corresponding lossy compression problem
under logarithmic loss which shares the same optimum coding scheme.
In this section, we investigate the exact performance of the optimum scheme
for the fixed-length lossy compression problem under logarithmic loss,
where the reconstruction alphabet is the set of all measures on $\cX$, i.e., $\cM(\cX)$.
Although the optimum scheme associated with $\cM(\cX)$ may differ from
the optimum scheme with the restricted reconstruction alphabets $\cY$,
it may provide some insights.
Note that we are not allowing randomization, but restrict attention to deterministic schemes.

\subsubsection{Average Distortion Criterion}
In this section, we characterize the minimum average distortion $D^{\star}(M)$
when we have fixed number of messages $M$.
Let an encoder and a decoder be $f:\cX\rightarrow\{1,\ldots,M\}$
and $g:\{1,\ldots,M\}\rightarrow \cM(\cX)$ where $g(m) = q^{(m)}\in\cM(\cX)$.
Then, we have
\begin{align*}
    &\E{\ell(X,g(f(X)))}\nonumber\\
    &= \sum_{x\in\cX} P_X(x) \log \frac{1}{q^{(f(x))}(x)}\\
    &= H(X)+\sum_{m=1}^M \sum_{x\in f^{-1}(m)} P_X(x) \log \frac{P_X(x)}{q^{(m)}(x)}\\
    &= H(X)+\sum_{m=1}^M u_m \log u_m\nonumber\\
    &~~~+\sum_{m=1}^M  u_m\sum_{x\in f^{-1}(m)} \frac{P_X(x)}{u_m} \log \frac{P_X(x)/u_m}{q^{(m)}(x)}
\end{align*}
where $u_m = \sum_{x\in f^{-1}(m)} P_X(x)$.
Since $P_{X|f(X)}(x|m) = \frac{P_X(x)}{u_m}$ for all $x\in f^{-1}(m)$, we have
\begin{align*}
    &\E{\ell(X,g(f(X)))}\nonumber\\
    &= H(X)-H(f(X))\nonumber\\
    &~~~+\sum_{m=1}^M  u_m D\left(P_{X|f(X)}(\cdot|m)\dbar q^{(m)}\right)\\
    &\geq H(X)-H(f(X)).
\end{align*}
Equality can be achieved by choosing $q^{(m)} = P_{X|f(X)}(\cdot|m)$
which can be done no matter what $f$ is.
Thus, we have
\begin{align*}
    D^{\star}(M) = H(X) - \max_{f: |f|\leq M} H(f(X)).
\end{align*}
Note that one trivial lower bound is
\begin{align*}
    D^{\star}(M) \geq H(X) - \log M.
\end{align*}

\subsubsection{Excess Distortion Criterion}
In this section, we characterize the minimum number of codewords $M^\star(D,\epsilon)$
that can achieve the distortion $D$ and the excess distortion probability $\epsilon$.
Let an encoder and a decoder be $f:\cX\rightarrow\{1,\ldots,M\}$
and $g:\{1,\ldots,M\}\rightarrow \cM(\cX)$ where $g(m) = q^{(m)}\in\cM(\cX)$.
Since $\ell(x,q)\leq D$ is equivalent to $q(x)\geq e^{-D}$, we have
\begin{align*}
    1-P_e =& \sum_{x\in\cX} P_X(x) \1\left(q^{(f(x))}(x)\geq e^{-D}\right)\\
    =&\sum_{m=1}^M\sum_{x\in f^{-1}(m)} P_X(x) \1\left(q^{(m)}(x)\geq e^{-D}\right).
\end{align*}

However, at most $\lfloor e^D\rfloor$ of the $q^{(m)}(x)$ can be larger than $e^{-D}$
where $\lfloor x\rfloor$ is the largest integer that is smaller than or equal to $x$.
Thus, we can cover at most $M\cdot \lfloor e^D\rfloor$ of the source symbols
with $M$ codewords. Suppose $P_X(1)\geq P_X(2)\geq \cdots \geq P_X(r)$,
then the optimum scheme is
\begin{align*}
    f(x) =& \left \lceil\frac{x}{\lfloor e^D \rfloor}\right\rceil\\
    q^{(m)}(x) =& \begin{cases}
        1/\lfloor e^D \rfloor &\mbox{if $f(x)=m$}\\
        0&\mbox{otherwise}
    \end{cases}
\end{align*}
where $q^{(m)} = g(m)$.
The idea is that each reconstruction symbol $q^{(m)}$ covers $\lfloor e^D\rfloor$ number of source symbols
by assigning probability mass $1/\lfloor e^D\rfloor$ to each of them.

The above optimum scheme satisfies
\begin{align*}
    1-P_e =& \sum_{x=1}^{M\cdot \lfloor e^D \rfloor} P_X(x)\\
    =& F_X\left(M\cdot \lfloor e^D \rfloor\right)
\end{align*}
where $F_X(\cdot)$ is the cumulative distribution function of $X$.
This implies that the minimum error probability is
\begin{align*}
    \epsilon^{\star}(M,D) = 1-F_X\left(M\cdot \lfloor e^D \rfloor\right).
\end{align*}

On the other hand, if we fix the target error probability $\epsilon$,
the minimum number of codewords is
\begin{align*}
    M^{\star}(D,\epsilon) = \left\lceil\frac{F_X^{-1}(1-\epsilon)}{\lfloor e^D \rfloor}\right\rceil
\end{align*}
where $F_X^{-1}(y) = \argmin_{1\leq x\leq r} \{x: F_X(x)\geq y\}$
and $\lceil x\rceil$ is the smallest integer that is larger than or equal to $x$.
Note that if we allow variable length coding without prefix condition,
the optimum coding scheme is similar to the optimum nonasymptotic lossless coding introduced in \cite{kontoyiannis2014optimal}.

\section{Successive Refinability}\label{sec:Successive Refinability}
We considered the fixed-length lossy compression problem so far.
In this section, we provide another universal property of logarithmic loss
where the source is discrete memoryless.

\subsection{Main Results}

Consider the successive refinement problem with a discrete memoryless source as described in Section \ref{subsec:Successive Refinability}.
Specifically, we are instrested in the case where the first decoder is under logarithmic loss
and the second decoder is under some arbitrary distortion measure $d$.
Using the result from previous section, there is an equivalent rate-distortion problem
under logarithmic loss for the second decoder.
Since any discrete memoryless source under logarithmic loss is successively refinable,
one may argue that the source is successively refinable under this setting.
However, this can be misleading since we cannot directly apply our result to
discrete memoryless source.
This is mainly because the decoder only considers product measures
when the source is discrete memoryless.
Moreover, the equivalent rate-distortion problem under logarithmic loss has
restricted reconstruction set with only finitely many measures,
while successive refinability of logarithmic loss is guaranteed
when the reconstruction alphabet is the set of all measures on $\cX$.

Despite of these misconceptions, we show that our initial guess was correct, i.e., it is successively refinable.
This provides an additional universal property of logarithmic loss
in the context of the successive refinement problem.

\begin{theorem}\label{thm:successive refinability of log loss}
    Let the source be arbitrary discrete memoryless.
    Suppose the distortion criterion of the first decoder is logarithmic loss
    while that of the second decoder is an arbitrary distortion criterion
    $d:\cX\times\hat{\cX}\ra[0,\infty]$.
    Then the source is successively refinable.
\end{theorem}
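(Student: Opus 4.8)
The plan is to verify the Equitz--Cover--Koshelev condition of Theorem~\ref{thm:iffcondition for successive refinability} directly at the single-letter level, rather than trying to transport the equivalence result of Section~\ref{sec:Equivalence} verbatim (which, as the preceding discussion notes, does not apply directly because the discrete memoryless reconstruction uses product measures and a finite reconstruction set). Thus I must exhibit a single-letter joint law $P_{\hat X_1,\hat X_2|X}$, where $\hat X_1 = Q_1\in\cM(\cX)$ is the logarithmic-loss (coarse) reconstruction and $\hat X_2$ is the (fine) reconstruction under $d$, such that $X-\hat X_2-\hat X_1$ is Markov, $I(X;\hat X_2)=R_2(D_2)$ with $\E{d(X,\hat X_2)}\le D_2$, and $I(X;Q_1)=R_1(D_1)=H(X)-D_1$ with $\E{\ell(X,Q_1)}\le D_1$.

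For the construction I would start from the optimal point-to-point test channel $P^\star_{\hat X_2|X}$ for the second decoder, so that $I(X;\hat X_2)=R_2(D_2)$ and $\E{d(X,\hat X_2)}\le D_2$ hold by definition. I then build the coarse reconstruction entirely from $\hat X_2$ so as to respect the Markov constraint: introduce a timesharing/erasure bit $T\sim\mathrm{Bern}(\beta)$ independent of $(X,\hat X_2)$, set $W=\hat X_2$ when $T=1$ and $W$ constant (erased) when $T=0$, and declare
\begin{align*}
    \hat X_1 = Q_1 = P_{X|W}(\cdot|W).
\end{align*}
Because $T\perp(X,\hat X_2)$ and $Q_1$ is a deterministic function of $W=(\hat X_2,T)$, the chain $X-\hat X_2-\hat X_1$ holds automatically.

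It remains to match the first decoder's rate and distortion, and here I would lean on Lemma~\ref{lem:iff condition}. Reconstructions of the form $Q_1=P_{X|W}(\cdot|W)$ are exactly the ``replace $Q$ by its posterior'' reconstructions from the discussion of that lemma, and they satisfy the self-consistency condition \eqref{eq:iff condition} for free: conditioned on $Q_1=q$, every $w$ consistent with $q$ yields posterior $q$, so $P_{X|Q_1}(\cdot|q)=q$. Next, a direct computation gives $\E{\ell(X,Q_1)}=\E{\log \frac{1}{P_{X|W}(X|W)}}=H(X|W)$, and since conditioning on $Q_1$ already pins down the posterior of $X$ we have $H(X|Q_1)=H(X|W)=H(X)-\beta\,I(X;\hat X_2)=H(X)-\beta R_2(D_2)$. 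Choosing
\begin{align*}
    \beta = \frac{R_1(D_1)}{R_2(D_2)} = \frac{H(X)-D_1}{R_2(D_2)}
\end{align*}
forces $H(X|Q_1)=D_1$, i.e.\ condition \eqref{eq:iff condition2}. The converse half of Lemma~\ref{lem:iff condition} then yields $I(X;Q_1)=H(X)-D_1=R_1(D_1)$ and $\E{\ell(X,Q_1)}=D_1$, completing all hypotheses of Theorem~\ref{thm:iffcondition for successive refinability}.

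The main obstacle I anticipate is reconciling two competing requirements: the coarse reconstruction must sit on the logarithmic-loss rate-distortion line (a straight line, achievable only by the posterior-type $Q_1$ above), yet it may depend on $X$ only through $\hat X_2$. This is precisely why $Q_1$ must be a coarsening of $\hat X_2$ and why the timesharing bit $T$ is needed to tune $H(X|Q_1)$ to the exact value $D_1$ that a deterministic function of the finite-valued $\hat X_2$ could not hit. The construction needs $\beta\in[0,1]$, i.e.\ $R_1(D_1)\le R_2(D_2)$; but this is not a genuine restriction, since the Markov chain $X-\hat X_2-\hat X_1$ in Theorem~\ref{thm:iffcondition for successive refinability} forces $I(X;\hat X_1)\le I(X;\hat X_2)$, so successive refinability can only be in question in this regime. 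I would close by noting that $Q_1$ takes only finitely many values in $\cM(\cX)$ and the $n$-letter reconstruction is the corresponding product, so the scheme is legitimate for the discrete memoryless setting and sidesteps the apparent obstructions flagged before the theorem.
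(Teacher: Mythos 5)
Your proof is correct, and at its core it is the same construction as the paper's: build the first-stage reconstruction as the posterior $P_{X|Z}(\cdot|Z)$ of the source given a physically degraded version $Z$ of the optimal second-stage reconstruction $\Xh^\star$, tune the degradation so that $H(X|Z)=D_1$, and then verify the condition of Theorem~\ref{thm:iffcondition for successive refinability}. The differences are in execution, and they work in your favor. The paper leaves $Z$ abstract, requiring only that $X-\Xh^\star-Z$ be Markov with $H(X|Z)=D_1$, and asserts that ``we can always find such random variable $Z$''; you instantiate $Z$ concretely as an erasure/time-sharing channel with parameter $\beta=R_1(D_1)/R_2(D_2)$, which makes that existence claim checkable and surfaces the constraint $\beta\le 1$, i.e.\ $R_1(D_1)\le R_2(D_2)$. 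This constraint is not an artifact of your construction: by data processing any admissible $Z$ satisfies $H(X|Z)\ge H(X)-I(X;\Xh^\star)=H(X)-R_2(D_2)$, so the paper's unqualified existence claim silently requires the same restriction, and your observation that the Markov chain in Theorem~\ref{thm:iffcondition for successive refinability} rules out successive refinability outside this regime anyway is the right way to dispose of it. A second difference: the paper obtains $I(X;Q)=I(X;Z)$ by assuming the posteriors $P_{X|Z}(\cdot|z)$ are all distinct, so that $Z\mapsto Q$ is injective, whereas your route---the self-consistency property $P_{X|Q_1}(\cdot|q)=q$ followed by the converse half of Lemma~\ref{lem:iff condition}---needs no such assumption, which is a genuine (if small) improvement since an erasure channel can produce colliding posteriors, e.g.\ when some $P_{X|\Xh}(\cdot|\xh)$ equals $P_X$. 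In short: same approach, but your version is tighter at exactly the two places where the paper's proof is loose.
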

\begin{proof}
    The source is successively refinable at $(D_1,D_2)$ if and only if
    there exists a $X-\Xh-Q$ such that
    \begin{align*}
        I(X;Q) =& R_1(D_1), \quad \E{\ell(X,Q)}\leq D_1\\
        I(X;\hat{X}) =& R_2(D_2), \quad \E{d(X,\hat{X})}\leq D_2.
    \end{align*}

    Let $P_{\Xh^\star|X}$ be the informational rate-distortion function achieving
    conditional distribution for the second decoder.
    I.e.,
    \begin{align*}
        I(X;\Xh^\star) = R_2(D_2), \quad \E{d(X,\Xh^\star)}=D_2.
    \end{align*}
    Consider a random variable $Z\in\cZ$ such that the joint distribution of
    $X,\Xh^\star,Z$ is given by
    \begin{align*}
        P_{X,\Xh,Z}(x,\xh,z) = P_{X,\Xh}(x,\xh)P_{Z|\Xh}(z|\xh)
    \end{align*}
    and $H(X|Z) = D_1$.
    We assume that all the $P_{X|Z}(\cdot|z)$ are distinct for all $z\in\cZ$.
    If we let $Q =  P_{X|Z}(\cdot|Z)$ and $q^{(z)} = P_{X|Z}(\cdot|z)$ for all $z\in\cZ$,
    then $X-\Xh^\star-Q$ forms a Markov chain and
    \begin{align*}
        P_{X|Q}(x| q^{(z)}) = q^{(z)}(x).
    \end{align*}
    
    Since $Q = P_{X|Z}(\cdot|Z)$ is a one-to-one mapping, we have
    \begin{align*}
        I(X;Q) =& I(X;Z)= H(X) -D_1= R_1(D_1).
    \end{align*}
    Also, we have
    \begin{align*}
        \E{\ell(X,Q)} =& \E{\log\frac{1}{P_{X|Z}(X|Z)}}= H(X|Z)= D_1.
    \end{align*}
    
    We have no constraints on the set $\cZ$ and the only requirements for
    the random variable $Z$ is $H(X|Z) = D_1$.
    Therefore, we can always find such random variable $Z$,
    and we can say that the source and respective distortion measures are successively refinable.
\end{proof}

The key idea of the theorem is that \eqref{eq:iff condition} is the only loose
required condition for the rate-distortion function achieving conditional distribution.
Thus, for any distortion criterion in the second stage,
we are able to choose an appropriate distribution $P_{X,\Xh,Q}$
that satisfies both \eqref{eq:iff condition} and the condition for successive refinability.

\begin{remark}
    We would like to point out that the way of constructing the joint
    distribution $P_{X,\Xh,Q}$ in the proof using random variable $Z$
    is the only possible way.
    More precisely, consider a Markov chain $X-\Xh-Q$ that satisfies the condition
    for successive refinability, then there exists a random variable $Z$
    such that $Q = P_{X|Z}(\cdot|Z)$ where $X-\Xh-Z$ forms a Markov chain.
    This is because we can have $Z=Q$, in which case $P_{X|Z}(\cdot|Z) =  P_{X|Q}(\cdot|Q)=Q$. 
\end{remark}

Theorem \ref{thm:successive refinability of log loss} can be generalized to
successive refinement problem with $K$ intermediate decoders.
Consider random variables $Z_k\in\cZ_k$ for $1\leq k\leq K$ such that
$X-\Xh-Z_K-\cdots-Z_1$ forms a Markov chain and the joint distribution of
$X,\Xh^\star,Z_1,\ldots,Z_K$ is given by
\begin{align*}
    &P_{X,\Xh,Z_1,\ldots,Z_K}(x,\xh,z_1,\ldots,z_K)\nonumber\\ 
    &= P_{X,\Xh}(x,\xh)P_{Z_1|\Xh}(z_1|\xh) \prod_{k=1}^{K-1} P_{Z_{k+1}|Z_k}(z_{k+1}|z_k)
\end{align*}
where $H(X|Z_k) = D_k$ and all the $P_{X|Z_k}(\cdot|z_k)$ are distinct for all $z_k\in\cZ_k$.
Similar to the proof of Theorem \ref{thm:successive refinability of log loss},
we can show that $Q_k =  P_{X|Z_k}(\cdot|Z_k)$ for all $1\leq k\leq K$
satisfy the condition for successive refinability.
Thus, we can conclude that any discrete memoryless source with $K$ intermediate decoders
is successively refinable as long as all the intermediate decoders are under logarithmic loss.

\section{Conclusion}
To conclude our discussion, we summarize our main contributions.
We showed that for any fixed length lossy compression problem under arbitrary distortion measure,
there exists a corresponding lossy compression problem under logarithmic loss
where the optimum schemes coincide.
Moreover, we proved that a good scheme for one lossy compression problem is also good for the corresponding problem.
This provides an algebraic structure on any reconstruction alphabet.
On the other hand, in the context of successive refinement problem,
we showed another universal property of logarithmic loss
that any discrete memoryless source is successively refinable
as long as the intermediate decoders operate under logarithmic loss.


\bibliographystyle{IEEEtran}
\bibliography{IEEEabrv,AlbertRef}

\end{document}